\documentclass{article}
\usepackage{fullpage}
\usepackage{amsmath}
\usepackage{amsthm}
\usepackage{amssymb}
\usepackage{url}
\usepackage{graphicx}
\usepackage{verbatim}
\usepackage{url}
\usepackage{graphicx}
\usepackage{tikz}
\usepackage[noline,boxed,figure]{algorithm2e}
\usepackage{array}
\usepackage{epsfig}
\usepackage{subcaption}
\usetikzlibrary{calc}
\usetikzlibrary{patterns}
\tikzstyle{mybox} = [draw=black, fill=white,  thick,
    rectangle, inner sep=10pt, inner ysep=20pt]
\tikzstyle{mybox} = [draw=black, fill=white,  thick,
    rectangle, inner sep=2pt, inner ysep=2pt]

\newtheorem{thm}{Theorem}
\newtheorem{lemma}{Lemma}

\newtheorem{prop}{Proposition}
\theoremstyle{definition}
\newtheorem{definition}{Definition}
\newtheorem{remark}{Remark}

\newtheorem{hypothesis}{Hypothesis}

\begin{document}

\title{Randomized Triangle Algorithms for Convex Hull Membership}
\author{ Bahman Kalantari\\
Department of Computer Science, Rutgers University, NJ\\
kalantari@cs.rutgers.edu
}
\date{}
\maketitle

\begin{abstract}
We present randomized versions of the {\it triangle algorithm} introduced in \cite{kal14}. The triangle algorithm tests membership of a distinguished point $p \in \mathbb{R} ^m$  in the convex hull of a given set $S$ of $n$ points in $\mathbb{R}^m$.  Given any {\it iterate} $p' \in conv(S)$, it searches for a {\it pivot}, a point $v \in S$ so that $d(p',v) \geq d(p,v)$.  It replaces $p'$ with the point on the line segment $p'v$ closest to $p$ and repeats this process. If a pivot does not exist, $p'$ certifies that
$p \not \in conv(S)$.  Here we propose two  random variations of the triangle algorithm that allow relaxed steps so as to take more effective steps possible in subsequent iterations. One is inspired by the {\it chaos game} known to result in the Sierpinski triangle.  The incentive is that randomized iterates together with a property of Sierpinski triangle would result in effective pivots. Bounds on their expected complexity coincides with those of the deterministic version derived in \cite{kal14}.
\end{abstract}

{\bf Keywords:}  Convex Hull, Linear Programming,  Approximation Algorithms, Randomized Algorithms, Triangle Algorithm, Chaos Game, Sierpinski Triangle.

\section{Introduction}

Given a finite set $S= \{v_1, \dots, v_n\} \subset \mathbb{R} ^m$, and a distinguished point $p \in \mathbb{R} ^m$, the {\it convex hull membership problem} (or {\it convex hull decision problem}) is to test if $p \in conv(S)$, the convex hull of $S$. Given a desired tolerance $\varepsilon \in (0,1)$, we call a point $p_\varepsilon \in conv(S)$ an $\varepsilon$-approximate solution if $d(p_\varepsilon,p) \leq \varepsilon R$, where $R= \max \{d(p,v_i): \quad i=1, \dots, n\}$. The convex hull membership problem is the most basic of the convex hull problems, see \cite{Goodman} for general convex hull problems. Nevertheless, it is a fundamental problem in computational geometry and linear programming and finds applications in statistics, approximation theory, and machine learning. Problems related to the convex hull membership include, computing the distance from a point to the convex hull of a finite point set,
support vector machines (SVM), approximating functions as convex combinations of other functions, see e.g. Clarkson \cite{clark2008} and Zhang \cite{zhang}, and \cite{kal14}. From the theoretical point of view the problem is solvable in polynomial time via the pioneering algorithm of Khachiyan  \cite{kha79}, or  Karmarkar \cite{kar84}.  For  large-scale problems greedy algorithms are preferable to polynomial-time algorithms. The best known such algorithms are, Frank-Wolfe algorithm \cite{Frank}, Gilbert's algorithm \cite{Gilbert}, and {\it sparse greedy approximation}. For connections between these and analysis  see  Clarkson \cite{clark2008},  G{\"a}rtner and Jaggi \cite{Gartner}.

A recent algorithm for the convex hull membership  problem is the {\it triangle algorithm} \cite{kal14}. It can either compute an $\varepsilon$-approximate solution, or when $p \not \in conv(S)$ a separating hyperplane and a point that approximates the distance from $p$ to $conv(S)$ to within a factor of $2$.
Based on preliminary experiments, the triangle algorithm performs quite well on reasonably large size problem, see \cite{Meng}. It can also be applied to solving linear systems, see \cite{kal12a} and \cite{Gibson} (for experimental results). Additionally, it can be applied to linear programming, see \cite{kal14}. Some variations of the triangle algorithm are given in \cite{kal12a} and \cite{kalSaks}.  The performance of the triangle algorithm is quite fast in detecting the cases when $p$ is not near a boundary point of $conv(S)$.  When $p$ is a near-boundary point of $conv(S)$ the triangle algorithm  may experience zig-zagging in achieving high accuracy approximations.  In \cite{kal14} we have described several strategies to remedy this, such as adding new auxiliary points to $S$.  In this article  we propose two randomized versions of the triangle algorithm. The  randomized algorithms are also applicable to solving linear systems and linear programming.

The article is organized as follows. In Section \ref{sec2}, we review the triangle algorithm, its relevant properties as well as bounds on its worst-case time complexities. In Section \ref{sec3}, we describe a randomized version, called {\it Greedy-Randomized Triangle Algorithm}. In Section \ref{sec4}, we describe a second randomized triangle algorithm inspired by the chaos game, see  Barnsley \cite{Barn93} and Devaney \cite{Devaney2004}, known to give rise to the well-known Sierpinski triangle.  We call this algorithm {\it Sierpinski-Randomized Triangle Algorithm}. We conclude with some remarks.

\section{Review of The Triangle Algorithm} \label{sec2}

Here we review the terminology and some results from \cite{kal14}. The Euclidean distance is denoted by $d(\cdot, \cdot)$.

\begin{definition} Given $p' \in conv(S)$, we say $v \in S$ is a {\it pivot} relative to $p$  at $p'$ (or $p$-pivot, or simply pivot) if $d(p',v) \geq d(p,v)$ (see Figure \ref{Fig1}).
\end{definition}

\begin{definition} Given $p' \in conv(S)$, we say $v \in S$ is a {\it strict pivot} relative to $p$ at $p'$ (or {\it strict} $p$-pivot, or simply {\it strict} pivot) if $\theta=\angle p'pv \geq \pi/2$ (see Figure \ref{Fig1}).
\end{definition}

\begin{definition} \label{defn4} We call a point $p' \in conv(S)$ a $p$-{\it witness} (or simply a {\it witness}) if $d(p',v_i) < d(p,v_i)$, for all $i=1, \dots, n$.
\end{definition}

A witness has the property that the orthogonal bisecting hyperplane to the line $pp'$ separates $p$ from $conv(S)$. Furthermore,
$$\frac{1}{2}\leq d(p,p')  \leq  d(p, conv(S)) \leq d(p,p').$$

\begin{thm} \label{thm1} {\bf (Distance Duality \cite{kal14})} $p \in conv(S)$ if and only if given any  $p' \in conv(S)$, there exists a pivot.
\end{thm}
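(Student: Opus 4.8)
The plan is to prove both directions of the equivalence. The easy direction is the "only if": suppose $p \notin conv(S)$; I must exhibit some $p' \in conv(S)$ with no pivot, i.e., a witness. Take $p'$ to be the point of $conv(S)$ closest to $p$ (which exists and is unique since $conv(S)$ is compact and convex). The standard separating-hyperplane argument for projections onto convex sets gives that for every $v \in conv(S)$, the angle $\angle p p' v$ is at least $\pi/2$; applying this to each $v_i \in S$ and using the law of cosines in triangle $p p' v_i$ yields $d(p,v_i)^2 \ge d(p,p')^2 + d(p',v_i)^2 > d(p',v_i)^2$, so $d(p',v_i) < d(p,v_i)$ for all $i$. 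Hence $p'$ is a witness and no pivot exists, establishing the contrapositive of "if there is always a pivot, then $p \in conv(S)$."

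For the "if" direction, assume $p \in conv(S)$ and let $p' \in conv(S)$ be arbitrary; I must produce a pivot $v \in S$. Write $p = \sum_{i=1}^n \alpha_i v_i$ with $\alpha_i \ge 0$, $\sum \alpha_i = 1$. The key computational identity is to expand the convex combination $\sum_i \alpha_i\, d(p',v_i)^2$ and compare it to $\sum_i \alpha_i\, d(p,v_i)^2$. Using $d(p',v_i)^2 = \|p'\|^2 - 2\langle p', v_i\rangle + \|v_i\|^2$ and $d(p,v_i)^2 = \|p\|^2 - 2\langle p, v_i\rangle + \|v_i\|^2$, and then summing against the weights $\alpha_i$ (so that $\sum_i \alpha_i v_i = p$), the cross terms collapse and one gets
$$\sum_{i=1}^n \alpha_i\, d(p',v_i)^2 - \sum_{i=1}^n \alpha_i\, d(p,v_i)^2 = \|p'\|^2 - 2\langle p', p\rangle + \|p\|^2 = d(p,p')^2 \ge 0.$$
Therefore $\sum_i \alpha_i\, d(p',v_i)^2 \ge \sum_i \alpha_i\, d(p,v_i)^2$, and since this is an inequality between two convex combinations with the same weights, at least one index $i$ must satisfy $d(p',v_i)^2 \ge d(p,v_i)^2$, i.e., $v_i$ is a pivot.

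I expect the main (though modest) obstacle to be the bookkeeping in the expansion: one must be careful that the weights $\alpha_i$ are exactly the coefficients expressing $p$ as a convex combination of the $v_i$, so that the identity $\sum_i \alpha_i v_i = p$ can be invoked to cancel the inner-product terms and leave precisely $d(p,p')^2$. Once that identity is in hand, the averaging argument — a nonnegative weighted average of the differences $d(p',v_i)^2 - d(p,v_i)^2$ forces at least one term to be nonnegative — is immediate, and the projection/separating-hyperplane fact needed for the converse is classical. A remark worth including is that the same inequality $\sum_i \alpha_i d(p',v_i)^2 - \sum_i \alpha_i d(p,v_i)^2 = d(p,p')^2$ shows the pivot can in fact be chosen to be a \emph{strict} pivot when $p' \ne p$, since then the weighted average of the differences is strictly positive; this is the version actually used to drive the algorithm.
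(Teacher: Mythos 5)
Your proof is correct. Note that the paper at hand states this theorem without proof, importing it from \cite{kal14}, so the comparison is with the argument there: the standard proof of the forward direction in \cite{kal14} runs by contradiction via the witness property --- if $d(p',v_i)<d(p,v_i)$ for all $i$, then every $v_i$ lies in the open halfspace of points strictly closer to $p'$ than to $p$ (bounded by the perpendicular bisector of $pp'$), hence so does $conv(S)$, which excludes $p$ --- while your forward direction instead uses the averaging identity $\sum_i \alpha_i\,d(p',v_i)^2-\sum_i\alpha_i\,d(p,v_i)^2=d(p,p')^2$. Both are valid; your computation is easily checked and buys something extra: since the weighted average of the differences equals $d(p,p')^2$, some index with $\alpha_i>0$ satisfies $d(p',v_i)^2-d(p,v_i)^2\ge d(p,p')^2$, which after expansion is exactly $(p'-p)^T(v_i-p)\le 0$, i.e.\ $\angle p'pv_i\ge\pi/2$ --- the paper's Definition 2 of a \emph{strict} pivot. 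This makes Theorem \ref{thm2} nearly free (one only needs $p\notin S$ to rule out the degenerate case $p'=p$, where the angle is undefined), and it quantifies how good the pivot is, which is what drives the $O(1/\varepsilon^2)$ analysis. Your backward direction via the projection onto $conv(S)$ and the obtuse-angle variational inequality is the same as in \cite{kal14} and is fine. One small caution: in your closing remark you equate ``strict pivot'' with the strict inequality $d(p',v)>d(p,v)$, but the paper's Definition 2 is the stronger angle condition $\angle p'pv\ge\pi/2$; as noted above, your identity actually delivers the stronger condition, so you should claim that instead.
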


\begin{thm}  \label{thm2} {\bf (Strict Distance Duality \cite{kal14})} Assume $p \not \in S$.  Then $p \in conv(S)$ if and only if given any  $p' \in conv(S)$, there exists a strict pivot.
\end{thm}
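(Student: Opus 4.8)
The plan is to prove the two implications separately, obtaining the reverse direction from Theorem \ref{thm1} and proving the forward direction directly from a convex representation of $p$.

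\emph{The easy direction ($\Leftarrow$).} Suppose that for every $p' \in conv(S)$ there is a strict pivot $v$, so $\theta = \angle p'pv \geq \pi/2$. First I would observe that a strict pivot is in particular an ordinary pivot: applying the law of cosines in the triangle with vertices $p'$, $p$, $v$ gives $d(p',v)^2 = d(p',p)^2 + d(p,v)^2 - 2\,d(p',p)\,d(p,v)\cos\theta$, and since $\cos\theta \leq 0$ the right-hand side is at least $d(p,v)^2$, whence $d(p',v) \geq d(p,v)$. Thus a pivot exists at every $p' \in conv(S)$, and Theorem \ref{thm1} yields $p \in conv(S)$.

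\emph{The main direction ($\Rightarrow$).} Assume $p \in conv(S)$ and fix an arbitrary $p' \in conv(S)$; I may assume $p' \neq p$, since otherwise $p$ has already been located. Write $p = \sum_{i=1}^n \alpha_i v_i$ with $\alpha_i \geq 0$ and $\sum_{i=1}^n \alpha_i = 1$. The key step is to pair this representation with the vector $p' - p$:
$$\sum_{i=1}^n \alpha_i\, (v_i - p)\cdot(p' - p) = \Bigl(\sum_{i=1}^n \alpha_i v_i - p\Bigr)\cdot (p'-p) = 0.$$
Since the $\alpha_i$ are nonnegative and sum to $1$, this convex combination of the numbers $(v_i-p)\cdot(p'-p)$ equalling $0$ forces at least one index $i$ with $\alpha_i > 0$ and $(v_i - p)\cdot(p'-p) \leq 0$. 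For that index, $v_i \neq p$ because $p \notin S$, so the angle $\angle p'pv_i$ is well defined, and the inequality $(v_i-p)\cdot(p'-p) \leq 0$ says precisely that $\cos \angle p'pv_i \leq 0$, i.e. $\angle p'pv_i \geq \pi/2$. Hence $v_i$ is a strict pivot at $p'$.

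\emph{Where the work lies.} The only delicate point is the sub-case in which every term $(v_i-p)\cdot(p'-p)$ carrying positive weight is actually $0$; one must still exhibit a genuine strict pivot, and this is exactly where the hypothesis $p \notin S$ enters — it guarantees $v_i \neq p$ so that the right angle $\angle p'pv_i = \pi/2$ is meaningful. Without this hypothesis the statement is false, as a one-dimensional example in which $p$ is an endpoint of $S$ shows. Everything else reduces to the law of cosines and the averaging identity above.
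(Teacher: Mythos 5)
Your proof is correct. Note that the paper itself does not prove Theorem \ref{thm2}; it only restates it from \cite{kal14}, so there is no in-paper argument to compare against. Your two halves are the natural ones: the reverse direction correctly reduces a strict pivot to an ordinary pivot via the law of cosines and then invokes Theorem \ref{thm1}, and the forward direction's averaging identity $\sum_i \alpha_i (v_i-p)\cdot(p'-p)=0$ cleanly produces an index with $\alpha_i>0$ and $(v_i-p)\cdot(p'-p)\leq 0$, which is exactly the condition $\angle p'pv_i \geq \pi/2$; the hypothesis $p\notin S$ is used precisely where you say it is, to guarantee $v_i\neq p$ so the angle is defined (and your one-dimensional counterexample shows the hypothesis cannot be dropped). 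The only cosmetic remark is that the ``delicate sub-case'' you flag (all positively weighted terms equal to $0$) needs no separate treatment, since $\angle p'pv_i=\pi/2$ already satisfies the definition of a strict pivot.
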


\begin{definition} Given three points $p,p',v \in \mathbb{R}^m$ such that $d(p',v) \geq d(p,v)$. Let $nearest(p; p'v)$ be the nearest point to $p$ on the line segment joining  $p'$ to $v$. Specifically, let
\begin{equation}
\alpha = \frac{(p-p')^T(v-p')}{d^2(v, p')}.
\end{equation}
Then
\begin{equation} \label{pdp}
nearest(p; p'v)=
\begin{cases}
(1-\alpha)p' + \alpha v, &\text{if $\alpha \in [0,1]$;}\\
v, &\text{otherwise.}
\end{cases}
\end{equation}
\end{definition}

\begin{remark}
By squaring the distances we have
\begin{equation}
d(p',v) \geq d(p,v) \iff  p'^Tp' - p^Tp \geq 2v^T(p'-p).
\end{equation}
Thus to search for a pivot does not require taking square-roots.
Neither does the computation of $nearest(p;p'v)$. It requires $O(mn)$ arithmetic operations.
\end{remark}
The triangle algorithm is summarized in the box.

\begin{algorithm}[htpb]
{\bf Triangle Algorithm}\

 \KwIn {$S= \{v_1, \dots, v_n\}$, $p$, $\varepsilon \in (0,1)$}
  \KwOut {$p' \in conv(S)$, either $d(p, p') \leq \varepsilon R$,
 or  $p'$ is a {\bf Witness}}
 $p'={\rm argmin}\{d(p,v): v \in S\}$\;
 \While{$(d(p,p')> \varepsilon R)$}{
  \eIf{no pivot exists}{
   Output $p'$ is a {\bf Witness} and halt\;
   }{given a pivot $v$, set
   $p' =nearest(p;p'v)$\;
  }
 }
 Output $p'$\;
\end{algorithm}

\begin{thm}  \label{thm3} {\rm (\cite{kal14})} Given $\varepsilon \in (0,1)$, if $p \in conv(S)$, the number of arithmetic operations of the triangle algorithm to compute $p_\varepsilon$ so that $d(p, p_\varepsilon) \leq \varepsilon R$ is
$$O\bigg (\frac{mn}{\varepsilon^2} \bigg ).$$
\end{thm}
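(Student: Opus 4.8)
The plan is to track the quantity $d(p,p')^2$ and show each iteration decreases it by a definite amount whenever $d(p,p') > \varepsilon R$, which will bound the number of iterations; multiplying by the $O(mn)$ cost per iteration (from the Remark) then gives the claimed bound. First I would observe that when $p \in conv(S)$, Theorem \ref{thm1} guarantees a pivot $v$ exists at every iterate $p'$, so the algorithm never halts with a witness and always performs the update $p' \mapsto p'' := nearest(p;p'v)$. The key geometric step is to estimate how much closer $p''$ is to $p$ than $p'$ was.

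The main computation is the following: with $v$ a pivot, so $d(p',v) \ge d(p,v)$, and writing $\delta = d(p,p')$, I want to show $d(p,p'')^2 \le \delta^2 - \delta^4/R^2$ (or a similar bound with a small constant). The cleanest way is to consider the point on the segment $p'v$; if $\alpha \in [0,1]$ in \eqref{pdp}, then $p''$ is the orthogonal projection of $p$ onto the line through $p'$ and $v$, and one computes the drop in squared distance explicitly using that $v$ is a pivot. The pivot condition $d(p',v) \ge d(p,v)$ forces the foot of the perpendicular to lie on the $v$-side, and a short calculation with the law of cosines (or directly with the formula for $\alpha$) gives $\delta^2 - d(p,p'')^2 \ge \delta^4 / d(p',v)^2$. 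In the case $\alpha > 1$, $p'' = v$ and $d(p,p'') = d(p,v) \le d(p',v)$, and one checks the same type of bound holds (indeed the decrease is even larger). Since both $p'$ and $v$ lie in $conv(S)$ with $v \in S$, one has $d(p',v) \le d(p',p) + d(p,v) \le 2R$ initially, but more carefully $d(p,v) \le R$ by definition of $R$ and $d(p,p') \le R$ after the first step (the initialization picks the nearest $v_i$, so $d(p,p') \le R$), hence $d(p',v) \le 2R$; this yields the recursion $\delta_{k+1}^2 \le \delta_k^2 - \delta_k^4/(4R^2)$.

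From the recursion $\delta_{k+1}^2 \le \delta_k^2(1 - \delta_k^2/(4R^2))$, a standard argument shows that after $O(1/\varepsilon^2)$ iterations one reaches $\delta_k \le \varepsilon R$: setting $t_k = R^2/\delta_k^2$, the recursion gives $t_{k+1} \ge t_k + 1/4$ (up to lower-order terms), so $t_k \ge t_0 + k/4$, and $\delta_k \le \varepsilon R$ is achieved once $t_k \ge 1/\varepsilon^2$, i.e. after $O(1/\varepsilon^2)$ iterations. Combining with the $O(mn)$ arithmetic cost of testing for a pivot and computing $nearest(p;p'v)$ per iteration (the Remark) gives the total bound $O(mn/\varepsilon^2)$.

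I expect the main obstacle to be the geometric estimate in the second paragraph — pinning down the exact constant and handling the two cases of \eqref{pdp} cleanly, in particular verifying that the pivot inequality $d(p',v)\ge d(p,v)$ is exactly what is needed to guarantee the projection step makes progress proportional to $\delta^4/R^2$ rather than something weaker. The iteration-count argument from the recursion is then routine.
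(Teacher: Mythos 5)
Your proposal is correct and follows essentially the same potential-function argument used to prove this theorem in \cite{kal14} (the present paper only states the result and cites that reference for the proof): the pivot condition gives $\cos\angle vp'p \geq d(p,p')/2d(p',v)$, hence a squared-distance drop of order $\delta^4/R^2$ per iteration, and the substitution $t_k = R^2/\delta_k^2$ yields $O(1/\varepsilon^2)$ iterations at $O(mn)$ cost each. The only nit is the constant in your key inequality, which should read $\delta^2 - d(p,p'')^2 \geq \delta^4/\bigl(4\,d(p',v)^2\bigr)$ rather than $\delta^4/d(p',v)^2$; this changes nothing in the $O(\cdot)$ bound.
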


\begin{figure}[htpb]
	\centering
	\begin{tikzpicture}[scale=0.6]


\begin{scope}[red]
\end{scope}
		
		\draw (0.0,0.0) -- (7.0,0.0) -- (-2.0,-4.0) -- cycle;
      \draw (0,0) -- (7,0) node[pos=0.5, above] {};
      \draw (-2,-4) -- (0,0) node[pos=0.5, above] {};
       \draw (0,0) -- (1.15,-2.6) node[pos=0.5, right] {};
       \draw (1.15,-2.6) node[below] {$p''$};
       \filldraw (1.15,-2.6) circle (2pt);
		\draw (0,0) node[left] {$p$};
		\draw (7,0) node[right] {$v$};
		\draw (-2,-4) node[below] {$p'$};
\draw (-1.5,-3.5) node[above] {$~~~~\theta$};
           \filldraw (0,0) circle (2pt);
\filldraw (7,0) circle (2pt);
\filldraw (-2,-4) circle (2pt);
		
	\end{tikzpicture}
\begin{center}
\caption{An example of an iterate, a strict pivot, and $p''=nearest(p;p'v)$.} \label{Fig1}
\end{center}
\end{figure}
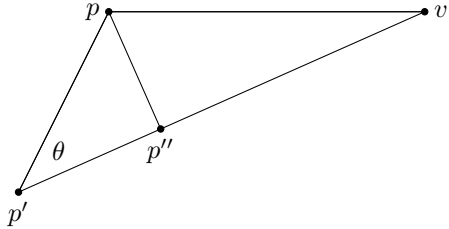

\begin{thm}  \label{thm4} {\rm (\cite{kal14})}  Assume $p$ lies in the relative interior of $conv(S)$. Let $\rho$ be the supremum of radii of the balls centered at $p$ in this relative interior. Given $\varepsilon \in (0,1)$, suppose the triangle algorithm uses a strict pivot in each iteration. The number of arithmetic operations to compute $p_\varepsilon \in conv(S)$ so that $d(p_\varepsilon,p) < \varepsilon R$ is
$$O\bigg (mn \bigg (\frac{R}{\rho} \bigg)^2 \ln \frac{1}{\varepsilon} \bigg).$$
\end{thm}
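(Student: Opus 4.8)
The plan is to show that, with a suitable choice of strict pivot, one iteration multiplies $d^2(p,p')$ by the fixed factor $1-\rho^2/(4R^2)\in[3/4,1)$; then $O\big((R/\rho)^2\ln(1/\varepsilon)\big)$ iterations suffice, and since each iteration costs $O(mn)$ (locating a pivot and forming $nearest(p;p'v)$, cf.\ the Remark) the stated bound follows. Two trivial preliminaries: the starting iterate $p'={\rm argmin}\{d(p,v):v\in S\}$ has $d(p,p')\le R$, and $d(p,p')$ is non-increasing along the run because $p'$ is itself a point of the segment $[p',v]$, so $d(p,nearest(p;p'v))\le d(p,p')$. Also, since $p\in conv(S)$ a strict pivot always exists, so the algorithm never halts with a witness.

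The key is a quantitative strengthening of Strict Distance Duality (Theorem \ref{thm2}): at any iterate $p'$ with $\delta:=d(p,p')>0$, the point $v^\ast={\rm argmax}\{(p-p')^Tv:v\in S\}$ is a strict pivot and satisfies $(p-p')^T(v^\ast-p)\ge\rho\,\delta$. To prove the inequality, fix $\rho'<\rho$ and set $q=p+\rho'(p-p')/\delta$; since $p,p'\in conv(S)\subseteq{\rm aff}(S)$, the point $q$ lies in ${\rm aff}(S)$ at distance $\rho'<\rho$ from $p$, hence $q\in conv(S)$, say $q=\sum_i\lambda_iv_i$ with $\lambda_i\ge0$ and $\sum_i\lambda_i=1$. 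Taking the inner product of $q-p=\sum_i\lambda_i(v_i-p)$ with $p-p'$ gives $\sum_i\lambda_i\,(p-p')^T(v_i-p)=(p-p')^T(q-p)=\rho'\|p-p'\|^2/\delta=\rho'\delta$, so $\max_i(p-p')^T(v_i-p)\ge\rho'\delta$; letting $\rho'\uparrow\rho$ gives $\max_i(p-p')^T(v_i-p)\ge\rho\delta>0$, and this maximum is attained at $v^\ast$ (the objectives $(p-p')^Tv_i$ and $(p-p')^T(v_i-p)$ differ by a constant independent of $i$). Positivity of that value means $\angle p'pv^\ast>\pi/2$, so $v^\ast$ is a strict pivot, found in $O(mn)$ operations.

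I would then run the standard descent estimate in the triangle $p\,p'\,v^\ast$. Put $v=v^\ast$, $N:=(p-p')^T(v-p')=(p-p')^T(v-p)+\|p-p'\|^2\ge\rho\delta+\delta^2$, and $p''=nearest(p;p'v)$; note $d(p',v)\le d(p',p)+d(p,v)\le\delta+R\le2R$, and with $\alpha=N/d^2(v,p')$ as in the definition of $nearest$ we have $\alpha>0$ since $N\ge\delta^2>0$. If $\alpha\in(0,1]$ then $p-p''$ is orthogonal to $v-p'$, so Pythagoras gives $d^2(p,p'')=\delta^2-(N/d(v,p'))^2\le\delta^2-\big(\rho\delta/(2R)\big)^2=\delta^2\big(1-\rho^2/(4R^2)\big)$. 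If $\alpha>1$ then $p''=v$ and a one-line computation gives $d^2(p,v)=\delta^2-(2\alpha-1)d^2(v,p')\le\delta^2-\alpha\,d^2(v,p')=\delta^2-N\le\delta^2-\rho\delta$, which is again $\le\delta^2\big(1-\rho^2/(4R^2)\big)$ since $\rho\delta\le R^2\le4R^2$. Either way, $d^2(p,p'')\le\big(1-\rho^2/(4R^2)\big)\,d^2(p,p')$.

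Iterating from $d(p,p')\le R$ we get $d^2(p,p')\le R^2\big(1-\rho^2/(4R^2)\big)^k$ after $k$ iterations, so $d(p,p')\le\varepsilon R$ as soon as $\big(1-\rho^2/(4R^2)\big)^k\le\varepsilon^2$; solving this and using $-\ln(1-x)\ge x$ shows that $k\le 8(R/\rho)^2\ln(1/\varepsilon)=O\big((R/\rho)^2\ln(1/\varepsilon)\big)$ iterations suffice, and multiplying by the $O(mn)$ cost per iteration finishes the proof. The one substantive point is extracting the $\rho$-dependent strict pivot $v^\ast$ of the second paragraph: an arbitrary strict pivot only yields $N\ge\delta^2$, hence the weaker $O(mn/\varepsilon^2)$ estimate of Theorem \ref{thm3}; the remaining wrinkles, namely that the supremum $\rho$ need not be attained (handled by the $\rho'<\rho$ limit) and the case split on $\alpha$, are routine.
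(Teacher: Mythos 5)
The paper does not prove this theorem; it is quoted from \cite{kal14} as part of the review in Section \ref{sec2}, so there is no in-paper proof to compare against. Your argument is correct and is essentially the original one: the ball condition yields, for every iterate $p'$, a pivot $v^\ast$ with $(p-p')^T(v^\ast-p)\ge \rho\, d(p,p')$ (your limiting argument with $\rho'<\rho$ correctly handles the supremum not being attained), and this extra $\rho\delta$ term in the inner product turns the generic per-step improvement into a fixed multiplicative contraction of $d^2(p,p')$, giving $O\big((R/\rho)^2\ln(1/\varepsilon)\big)$ iterations at $O(mn)$ cost each; the algebra in both the $\alpha\le 1$ and $\alpha>1$ cases checks out. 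Your closing observation is the one point worth emphasizing: the contraction is \emph{not} delivered by an arbitrary strict pivot (a pivot with $\angle p'pv$ exactly $\pi/2$ gives only the reduction $\delta^2\mapsto \delta^2 r^2/(\delta^2+r^2)$, hence only the $O(mn/\varepsilon^2)$ rate of Theorem \ref{thm3}), so the theorem's phrase ``uses a strict pivot in each iteration'' must be read as using the greedy pivot $v^\ast={\rm argmax}\{(p-p')^Tv: v\in S\}$, or at least one satisfying the $\rho\delta$ inequality --- which is exactly what the proof in \cite{kal14} constructs. The only cosmetic difference from the source is that you bound $d(p',v)\le 2R$ crudely rather than via the law of cosines, which costs a factor of $4$ inside the $O(\cdot)$ and is immaterial.
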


\section{Greedy-Randomized Triangle Algorithm} \label{sec3}

In this section we describe a randomized algorithm we call {\it Greedy-Randomized Triangle Algorithm}.  It is designed to avoid  possible zig-zagging in the triangle algorithm.  Given an iterate $p'$, it computes a pivot $v$, if it exists. Then it randomly selects the new iterate as the midpoint of $p'$ and $v$, or  $nearest(p; p'v)$. It records the closest known point to $p$ as $p_*$, the {\it current incumbent candidate}, and updates it whenever necessary.

\begin{algorithm}[htpb]
{\bf Greedy-Randomized Triangle Algorithm}\

 \KwIn {$S= \{v_1, \dots, v_n\}$, $p$, $\varepsilon \in (0,1)$}
 \KwOut {$p' \in conv(S)$, either $d(p, p') \leq \varepsilon R$,
 or  $p'$  is a {\bf Witness}}
 $p'={\rm argmin}\{d(p,v): v \in S\}$, $p_*=p'$\;
 \While{$(d(p,p_*)> \varepsilon R)$}
{
  \eIf{no pivot exists at $p'$}{Output $p'$ is a {\bf Witness} and halt\;}
  {given a pivot $v$,
   randomly set $p'=(p'+v)/2$, or $p'=nearest(p;p'v)$\;
   Update $p_*$;
  }
}
 Output $p'=p_*$\;
\end{algorithm}

Figure \ref{Fig2} describes a case where given an iterate $p'$ and pivot $v_1$, we can get closer to $p$ by selecting  the closest point $p''$ on $p'v_1$.   However, selecting instead  $p'''$, the midpoint of $p'$ and $v_1$, we create the chance to select a better approximation using $p'''$ as iterate.

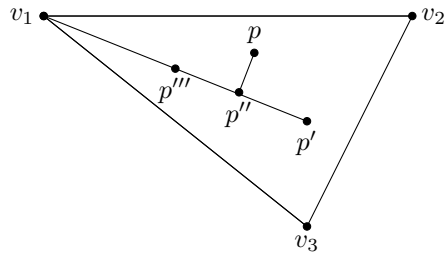
\begin{figure}[htpb]
	\centering
	\begin{tikzpicture}[scale=0.7]

		\draw (0.0,0.0) -- (7.0,0.0) -- (5.0,-4.0) -- cycle;
      \draw (0,0) -- (7,0) node[pos=0.5, above] {};
      \draw (5,-4) -- (0,0) node[pos=0.5, above] {};
       \draw (4,-.7) node[above] {$p$};
		\draw (0,0) node[left] {$v_1$};
		\draw (7,0) node[right] {$v_2$};
		\draw (5,-4) node[below] {$v_3$};
\draw (5,-2) node[below] {$p'$};
           \filldraw (0,0) circle (2pt);
\filldraw (7,0) circle (2pt);
\filldraw (5,-4) circle (2pt);
\filldraw (0,0) circle (2pt);
\filldraw (4,-.7) circle (2pt);
\filldraw (5,-2) circle (2pt);
\draw (5,-2) -- (0,0) node[pos=0.5, above] {};
\filldraw (2.5,-1) circle (2pt);
\filldraw (3.71,-1.45) circle (2pt);
\draw (3.71,-1.45) -- (4,-.7) node[pos=0.5, above] {};
\draw (3.71,-1.45) node[below] {$p''$};
\draw (2.5,-1) node[below] {$p'''$};
		
	\end{tikzpicture}
\begin{center}
\caption{An example where $p'''=(p'+v_1)/2$ is a better iterate than $p''=nearest(p;p'v_1)$ for the next iteration.} \label{Fig2}
\end{center}
\end{figure}

From properties of the triangle algorithm  reviewed  in the previous section we have,

\begin{thm} If $p \in conv(S)$, bound on the expected number of arithmetic operations of the Greedy-Randomized Triangle Algorithm to compute an $\varepsilon$-approximate solution is
$$O \bigg (\frac{mn}{\varepsilon^2} \bigg ).$$
Moreover, if it is known that $p$ is the center of ball of radius $\rho$ contained in the relative interior of $conv(S)$, and if each times it computes a pivot for an iterate the pivot is a strict pivot, then bound on the expected number of arithmetic operations to compute an $\varepsilon$-approximate solution is
$$O\bigg (mn \bigg (\frac{R}{\rho} \bigg)^2 \ln \frac{1}{\varepsilon} \bigg). ~\qed$$
\end{thm}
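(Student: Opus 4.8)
The plan is to transfer the worst-case bounds of Theorems~\ref{thm3} and~\ref{thm4} to the randomized algorithm using three structural facts together with the fact that the relaxed (midpoint) step is taken with probability only $1/2$. First, since $p\in conv(S)$, the Distance Duality Theorem~\ref{thm1} guarantees a pivot at every iterate, so the \emph{while}-loop never exits through the \textbf{Witness} branch and halts exactly when $d(p,p_*)\le\varepsilon R$. Second, the incumbent is monotone: $d(p,p_*)$ never increases along a run. Third, every iterate lies in the ball $B(p,R)$: the starting iterate satisfies $d(p,p')=\min_i d(p,v_i)\le R$; the greedy step $p'\mapsto nearest(p;p'v)$ cannot increase $d(p,\cdot)$; and the relaxed step satisfies $d\big(p,(p'+v)/2\big)\le\tfrac12\big(d(p,p')+d(p,v)\big)\le R$, since $d(p,v)\le R$.

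Next I would isolate the one-step estimates underlying the deterministic bounds. For Theorem~\ref{thm3}: if $v$ is a pivot at $p'$ and $\delta=d(p,p')$, then the pivot condition (equivalently $\,p'^{T}p'-p^{T}p\ge 2v^{T}(p'-p)\,$, the inequality in the Remark) gives $(p-p')^{T}(v-p')\ge\delta^{2}/2$ after expanding squares, and combined with $d(p',v)\le 2R$ this yields $d^{2}\big(p,nearest(p;p'v)\big)\le\delta^{2}-\delta^{4}/(16R^{2})$; iterating from $\delta_0\le R$ gives $\delta_k^{2}=O(R^{2}/k)$, hence the $O(1/\varepsilon^{2})$ iteration count. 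For Theorem~\ref{thm4} the analogous ingredient is the strict-pivot linear-rate estimate, which under the $\rho$-ball hypothesis gives $d^{2}\big(p,nearest(p;p'v)\big)\le\delta^{2}\big(1-\Omega((\rho/R)^{2})\big)$ and the $O((R/\rho)^{2}\ln(1/\varepsilon))$ count.

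With these in hand the randomness enters as follows. Conditioned on any history with $d(p,p_*)>\varepsilon R$, with probability $1/2$ the iteration executes \emph{exactly} the deterministic triangle-algorithm step from the current iterate, and with the remaining probability it takes a relaxed step which cannot harm the monotone incumbent. The intended conclusion is that the expected number of iterations is within a constant factor of the deterministic count, and since a pivot search together with one evaluation of $nearest(\cdot)$ costs $O(mn)$ arithmetic operations (the Remark), the two displayed bounds follow.

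The step I expect to be the crux is making ``within a constant factor'' rigorous: a relaxed step can move the current iterate \emph{away} from $p$ (though still inside $B(p,R)$), so greedy steps need not compound across intervening relaxed steps even though the incumbent is protected. I would try to close this by exhibiting a potential $\Phi_k$ — a function of $d^{2}(p,p'_k)$, or of the minimum of that with the current incumbent — whose conditional expectation decreases by a quantifiable amount per iteration so long as $d(p,p_*)>\varepsilon R$, and then apply an optional-stopping/Wald-type argument to bound the expected stopping time; alternatively one can couple the run against a deterministic run and charge greedy steps to deterministic steps. If the plain $1/2$ version resists such an argument, a robust fallback that yields the stated expected bounds directly is to take the relaxed step with probability of the order of the deterministic per-step progress (roughly $\varepsilon^{2}$, resp. $(\rho/R)^{2}$), so that a run of enough consecutive greedy steps occurs with $\Omega(1)$ probability; I would either carry out the potential argument for the $1/2$ version or present this scaled variant.
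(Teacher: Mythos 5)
You have correctly reconstructed everything the paper actually contains: the paper's entire justification for this theorem is the sentence ``From properties of the triangle algorithm reviewed in the previous section we have,'' followed by the statement with its end-of-proof mark built in --- i.e., exactly the appeal to Theorems~\ref{thm3} and~\ref{thm4} plus the implicit ``half the iterations are deterministic triangle-algorithm steps'' heuristic that your third paragraph describes and that your fourth paragraph rightly refuses to accept. So relative to the paper you are missing nothing; rather, you have identified that the paper omits the only nontrivial step.

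That step is a genuine gap, and neither of the repairs you sketch for the plain $1/2$ version can close it using the available one-step estimates. The deterministic guarantee is only $d^2(p,p'_{k+1})\le d^2(p,p'_k)\bigl(1-c\,d^2(p,p'_k)/R^2\bigr)$, so the per-iteration gain vanishes as the iterate approaches $p$, while a single midpoint step can undo all accumulated progress: take $p$ to be the centroid of a regular simplex, so that $d(p,v_i)=R$ for every $i$; then from an iterate at distance $\delta$ the midpoint step lands at distance at least $(R-\delta)/2$. On such an instance, reaching $d(p,p')\le\varepsilon R$ requires at some point $\Omega(1/\varepsilon^2)$ \emph{consecutive} greedy draws (an event of probability $2^{-\Omega(1/\varepsilon^2)}$ per attempt), so no potential that is a function of $d(p,p')$ alone can have negative expected drift near termination (the midpoint branch contributes a fixed positive jump with probability $1/2$), the incumbent cannot help since it only improves when the iterate beats it, and coupling against a deterministic run fails because post-reset greedy steps restart from scratch. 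The same obstruction applies to the strict-pivot/linear-rate case. Your fallback of taking the relaxed step with probability of order $\varepsilon^2$ (resp.\ $(\rho/R)^2$) does yield the displayed bounds, but it is a different algorithm and hence proves a modified statement, not this one. As written, the theorem's bounds are justified only for the subsequence of greedy steps; to prove them for the algorithm as stated one would need either a stronger one-step lemma controlling how destructive a midpoint step can be, or an assumption on pivot selection, neither of which appears in the paper.
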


\section{A Randomized Triangle Algorithm Based on The Chaos Game} \label{sec4}

As described by Devaney \cite{Devaney2004}:

{\it The {\it chaos game} and its multitude of variations provides a wonderful opportunity to combine elementary ideas from geometry, linear algebra, probability, and topology with some quite contemporary mathematics. The easiest chaos game to understand is played as follows. Start with three points at the vertices of an equilateral triangle. Color one vertex red, one green, and one blue. Take a die and color two sides red, two sides green, and two sides blue. Then pick any point whatsoever in the triangle, this is the seed. Now roll the die. Depending upon which color comes up, move the seed half the distance to the similarly colored vertex. Then repeat this procedure, each time moving the previous point half the distance to the vertex whose color turns up when the die is rolled. After a dozen rolls, start marking where these points land.}

Devaney goes on to say, when this process is  repeated  thousands of times, the pattern that emerges is one of the most famous fractals of all, the {\it Sierpinski triangle}. The Sierpinski triangle consists of three self-similar pieces, each of which is exactly one half the size of the original triangle in terms of the lengths of the sides.

\begin{figure}[h!]
\centering
\includegraphics[width=2.0in]{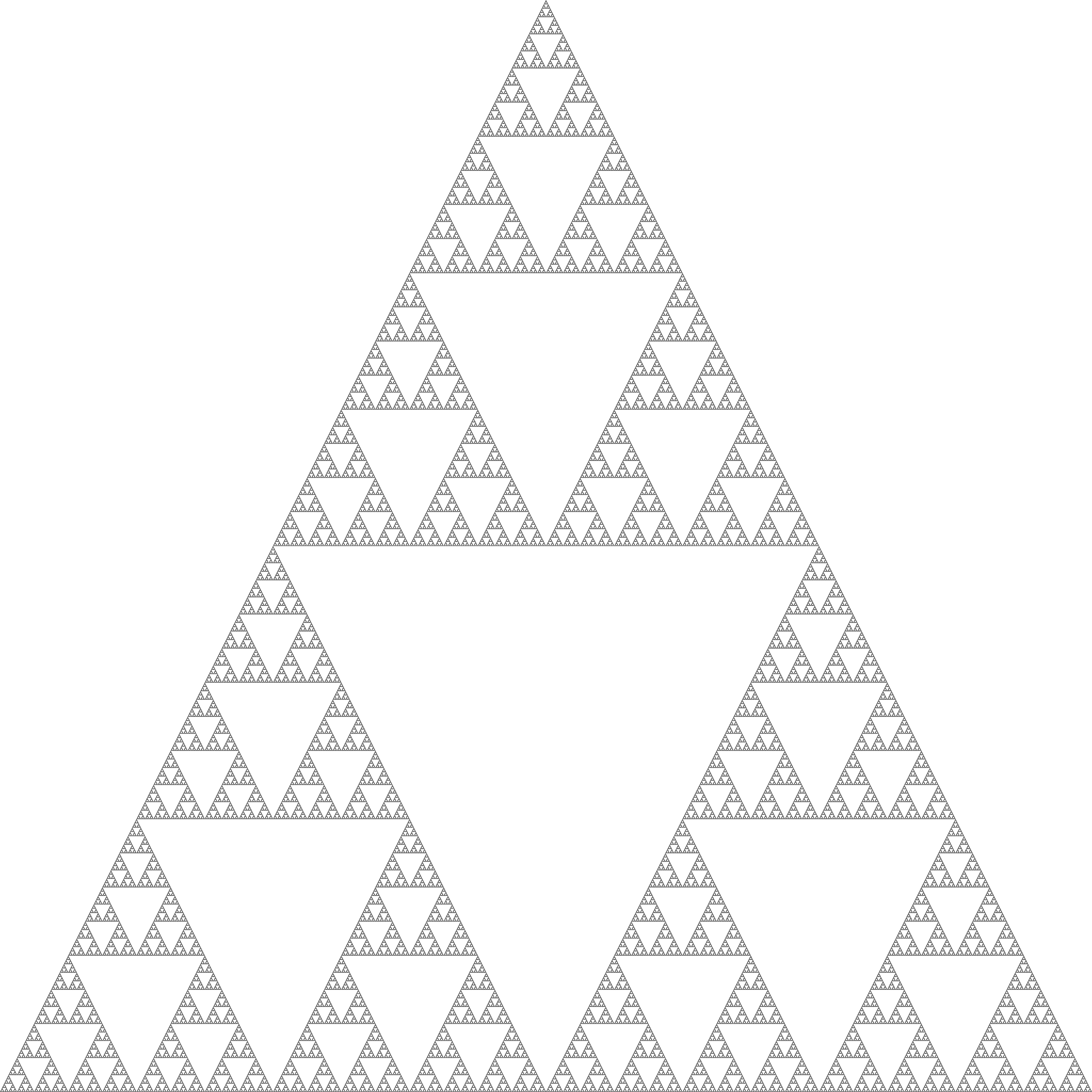}
\caption{The Sierpinski Triangle.} \label{Sier}
\end{figure}

\subsection{The Sierpinski-Randomized Triangle Algorithm}

Consider the convex hull problem for the very simple case where $S$ consists of three points as the vertices of an equilateral triangle and $p$ is a point inside the triangle.  We make the following claim on the Sierpinski triangle, see  Figure \ref{Sier} which is visually evident and provable from its topological properties. We  refer to the convex hull of the dots as enclosing Sierpinski triangle.\\

\begin{prop}
Given any dense subset of the Sierpinski triangle, $\Sigma$, no matter where $p$ is located inside the enclosing Sierpinski triangle, and no matter which of the three vertices is chosen as $v$, we can select a Sierpinski dot, say $p'$, for which the line segment $p'v$ either contains $p$, or comes as close to it as desired.
\end{prop}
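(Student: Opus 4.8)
The plan is to exploit the self-similar structure of the Sierpinski triangle directly. Write $T$ for the original equilateral triangle with vertices $v_1,v_2,v_3$, and for each $i$ let $h_i(x) = (x+v_i)/2$ be the contraction toward $v_i$ by factor $1/2$. The Sierpinski triangle $\mathcal{S}$ is the attractor of $\{h_1,h_2,h_3\}$, so $\mathcal{S} = h_1(\mathcal{S}) \cup h_2(\mathcal{S}) \cup h_3(\mathcal{S})$, and iterating, for every finite word $w = i_1 i_2 \cdots i_k$ the set $h_w(\mathcal{S}) := h_{i_1}\circ\cdots\circ h_{i_k}(\mathcal{S})$ is a scaled copy of $\mathcal{S}$ inside the subtriangle $h_w(T)$, which has diameter $2^{-k}\,\mathrm{diam}(T)$. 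I would fix the chosen vertex $v = v_j$ and the target point $p$, and show that the ray from $v_j$ through $p$ meets arbitrarily small subtriangles of the standard dissection that still contain Sierpinski dots; picking such a dot as $p'$ forces $p$ to lie within $2^{-k}\,\mathrm{diam}(T)$ of the segment $p'v_j$.

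The key steps, in order: First, observe that the three subtriangles $h_1(T), h_2(T), h_3(T)$ at the first level all share the property that $h_j(T)$ contains $v_j$ as a vertex, while the other two, $h_i(T)$ for $i\neq j$, are the "far" pieces; crucially the line through $v_j$ and any interior point $p$ passes through at least one of the three first-level subtriangles other than (or in addition to) the one at $v_j$ — more carefully, the segment from $v_j$ to $p$ enters the union $\mathcal{S}$'s ambient triangle and one shows it meets some $h_w(T)$ of the prescribed level. Second, set up an induction: given that the segment $v_j p$ passes through a level-$k$ subtriangle $\Delta$ of the dissection, apply the similarity carrying $T$ to $\Delta$; the image of the chaos-game dissection subdivides $\Delta$ into three level-$(k{+}1)$ pieces, and the segment (restricted to $\Delta$, or its extension) crosses at least one of them — here one must be slightly careful that the relevant sub-piece is one that belongs to $\mathcal S$ (i.e.\ not the deleted middle), but since all three corner subtriangles survive in the Sierpinski construction and their union covers $\Delta$ minus the open middle triangle, a line segment through $\Delta$ that is not wholly inside the deleted middle meets a surviving corner piece. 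Third, conclude that there is a nested sequence of surviving subtriangles $\Delta_k$, each of diameter $\to 0$, each meeting the line $v_j p$; choosing any Sierpinski dot $p'\in\Delta_k\cap\Sigma$ (nonempty because $\Sigma$ is dense in $\mathcal S$ and $\Delta_k$ has nonempty interior relative to $\mathcal S$), the distance from $p$ to the line $p'v_j$ is at most $\mathrm{diam}(\Delta_k) \le 2^{-k}\mathrm{diam}(T)$, which can be made as small as desired; if one wants $p$ actually on the segment rather than merely near it, pass to the limit point $p_\infty = \bigcap_k \Delta_k \in \mathcal S$ and note $v_j, p_\infty$ are collinear with $p$.

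The main obstacle will be the second step — rigorously guaranteeing that the line segment from $v_j$ to $p$ always meets a \emph{surviving} corner subtriangle at each level, rather than threading exclusively through the deleted open middle triangle. This is where the "provable from its topological properties" hand-wave in the statement has to be cashed out: one needs the elementary fact that a line cannot pass through the interior of the middle triangle of a dissection without also crossing the closure of at least one corner triangle (since the three corner triangles' closures cover the whole triangle except the open middle, and any segment with an endpoint $v_j$ at a corner starts outside the middle). Handling the boundary/degenerate cases — $p$ lying on an edge of some subtriangle, or the segment passing exactly through a shared vertex — requires a little care but does not affect the conclusion, since in those cases $p$ lies on the closure of a surviving piece directly. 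Once this combinatorial-geometric lemma about lines versus dissections is in hand, the nested-intervals argument and the density of $\Sigma$ finish the proof routinely.
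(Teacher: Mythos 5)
The paper itself offers no proof of this proposition (it is asserted to be ``visually evident''), so your proposal is judged on its own merits. There is a genuine gap, and it is directional: your nested-subtriangle argument controls the distance from $p$ to the \emph{line} through $p'$ and $v_j$, but the proposition is about the \emph{segment} $p'v_j$, which contains (or passes near) $p$ only if $p'$ lies on the ray from $v_j$ through $p$ \emph{beyond} $p$, i.e.\ with $d(v_j,p')\geq d(v_j,p)$. Your induction is run on the segment $v_jp$ and selects at each level ``a surviving corner piece that the segment crosses''; nothing prevents it from always selecting the corner piece at $v_j$ (which the segment certainly meets, since it contains $v_j$), producing nested triangles shrinking to $v_j$ itself. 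Then $p'\approx v_j$ and the segment $p'v_j$ is a tiny stub nowhere near $p$. Likewise your closing remark that $v_j$, $p_\infty$, $p$ are collinear does not give $p\in[v_j,p_\infty]$; betweenness is exactly what is missing. The surviving-corner-piece lemma in your second step (a chord of $\Delta$ cannot lie wholly in the open deleted middle, since its endpoints lie on $\partial\Delta$) is fine, but it is applied to the wrong portion of the line.

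The fix is short and makes most of the machinery unnecessary. Since $p$ is interior to the enclosing triangle $T$, the ray from $v_j$ through $p$ exits $T$ at a point $q$ on the edge opposite $v_j$, and $p$ lies on the segment $v_jq$ with $q\neq p$. The boundary $\partial T$ is contained in the Sierpinski triangle $\mathcal S$ (the deleted middle triangles are open and, by induction on the level, disjoint from the edges of every surviving subtriangle), so $q\in\mathcal S$. By density of $\Sigma$ choose $p'\in\Sigma$ with $d(p',q)<\delta$; parametrizing both segments from $v_j$, the point $(1-t)v_j+tp'$ with $t=d(v_j,p)/d(v_j,q)$ is within $\delta$ of $p$, so the segment $p'v_j$ comes as close to $p$ as desired (and contains $p$ exactly when $q\in\Sigma$). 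If you prefer to keep your dissection induction, run it on the segment $[p,q]$ rather than $[v_j,p]$, so that every nested surviving subtriangle, and hence $p'$, lies beyond $p$ as required.
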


The above gives an incentive to state a randomized triangle algorithm based on its generalization. First, consider the following generalization of the chaos game.

\begin{definition}(General Chaos Game)
Given a set of points  $S= \{v_1, \dots, v_n\} \subset \mathbb{R} ^m$, let $\Sigma(S)$ correspond to the dots generated via the following generalization of Sierpinski chaos game: Start with a seed $p' \in conv(S)$, and with probability $1/n$ randomly select $v \in S$, then record $(p'+v)/2$ as a new point and place it in $\Sigma(S)$. Replace $p'$ with $(p'+v)/2$ and repeat the process indefinitely.
\end{definition}

The following hypothesis gives the incentive to define another randomized triangle algorithm, what we call the {\it Sierpinski-Randomized Triangle Algorithm}.

\begin{hypothesis}
Suppose $p \in conv(S)$. Given $\varepsilon \in (0,1)$, $v \in S$,  there exists $p' \in \Sigma(S)$ such that

(i) $v$ is a $p$-pivot with respect to $p'$ (i.e. $d(p',v) \geq d(p, v)$),

(ii)  If $p''=nearest(p;p'v)$, then $d(p,p'') \leq \varepsilon d(p,v)$.
\end{hypothesis}

Regardless of the validity of the above hypothesis, we prove that bounds on the expected complexity of the Sierpinski-Randomized Triangle Algorithm is no worse than bounds on the worst-case complexity of the triangle algorithm itself.  The algorithm is inspired by the chaos game, however it keeps track of the current incumbent candidate, $p_*$, the closest known point to $p$.

Given an iterate $p'$, it randomly (with equal probability) selects $v \in S \cup \{p_*\}$. If $v$ is a pivot, it randomly  either replaces $p'$ with $(p'+v)/2$, or with $nearest(p;p'v)$. Otherwise, if $v \not = p_*$, the next iterate is $(p'+v)/2$, or else $v= p_*$ and not a pivot. In this case $p_*$ will be taken to be the iterate and the algorithm searches for a pivot $v'$ at $p_*$.  When such a pivot exists, the next iterate will be $nearest(p;p_*v')$. Except for this case, the other cases take $O(m+n)$ operations.

\begin{algorithm}[htpb]
{\bf Sierpinski-Randomized Triangle Algorithm}\

 \KwIn {$S= \{v_1, \dots, v_n\}$, $p$, $\varepsilon$}
 \KwOut {$p' \in conv(S)$, either $d(p, p') \leq \varepsilon R$,
 or  $p'$ a {\bf Witness}}
$p'={\rm argmin}\{d(p,v): v \in S\}$, $p_*=p'$\;
 \While{$(d(p,p')> \varepsilon R)$}{
  randomly select $v \in S \cup \{p_*\}$\;
  \eIf{$v$ is a $p$-pivot at $p'$}{at random set $p'=(p'+v)/2$, or $p' =nearest(p;p'v)$\;}
  {\eIf{$v=p_*$}{\eIf{no $p$-pivot exists at $p_*$}{$p'=p_*$, Output $p'$ a {\bf Witness} and halt\;}
  {given a $p$-pivot $v'$ at $p_*$, $p'=nearest(p;p_*v')$, $p_*=p'$\;}}
{$p'=(p'+v)/2$;}}
Update $p_*$;
 }
 Output $p'$\;
\end{algorithm}

\begin{lemma} The expected number of arithmetic operations in each iteration of the Sierpinski-Randomized Triangle Algorithm is $O(m+n)$.
\end{lemma}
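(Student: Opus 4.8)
The plan is to walk through a single pass of the \textbf{while} loop branch by branch, bound the work in each branch, and then average over the one random choice $v\in S\cup\{p_*\}$ made in that pass, conditioning on everything that happened in the algorithm before it. The point of the conditioning is that the bound obtained will be independent of the history, so it will also bound the unconditional expected per-iteration cost.

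First I would dispose of the branches that do \emph{not} search $S$ for a pivot at $p_*$. Testing the guard $d(p,p')>\varepsilon R$ is one distance, $O(m)$; drawing $v$ from the $(n+1)$-element set $S\cup\{p_*\}$ is $O(n)$ in the crudest model and $O(1)$ otherwise; testing whether $v$ is a $p$-pivot at $p'$ is $O(m)$ by the squared-distance reformulation of the Remark, since it reduces to forming $p'^Tp'$ and the single inner product $v^T(p'-p)$ (with $p^Tp$ available). When $v$ is a pivot, and when $v$ is not a pivot but $v\neq p_*$, the new iterate is $(p'+v)/2$ or $nearest(p;p'v)$; either is built in $O(m)$ vector operations, plus $O(n)$ more if one also carries $p'$ as an explicit convex combination of $v_1,\dots,v_n$ (the old coefficients are rescaled and one of them is incremented). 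The final update of $p_*$ compares two distances, $O(m)$. Hence every such branch costs $O(m+n)$, matching the remark made in the paragraph preceding the algorithm.

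It remains to isolate the single expensive branch, namely $v$ is \emph{not} a $p$-pivot at $p'$ \emph{and} $v=p_*$: this is the only place the algorithm scans all of $S$ for a $p$-pivot at $p_*$, an $O(mn)$ search by the Remark, and whatever follows it — halting with a witness, or one further $nearest(p;p_*v')$ computation plus the $p_*$ bookkeeping — adds only $O(m+n)$. This branch is entered only if $v=p_*$, and since $v$ is a fresh uniform draw from a set of size $n+1$, independent of the past, that event has probability at most $1/(n+1)$ regardless of the current $p'$ and $p_*$. Therefore the expected cost of the pass, conditioned on the history, is at most
\[
\frac{1}{n+1}\cdot O(mn)+\left(1-\frac{1}{n+1}\right)\cdot O(m+n)=O(m)+O(m+n)=O(m+n),
\]
and, being independent of the history, this also bounds the unconditional expected cost of an iteration. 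I do not see a genuine obstacle here; the one thing requiring care is exactly what makes the amortization $\tfrac{1}{n+1}\,O(mn)=O(m)$ legitimate — that the probability of the expensive branch is $\le 1/(n+1)$ at \emph{every} iteration, uniformly in the algorithm's state — which holds because each iteration's choice of $v$ is an independent uniform draw over a fixed $(n+1)$-element set; and, secondarily, keeping the $O(n)$ bookkeeping term so the ``cheap'' branches are charged $O(m+n)$ rather than $O(m)$.
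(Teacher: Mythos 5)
Your proposal is correct and follows essentially the same argument as the paper: the expensive $O(mn)$ pivot search at $p_*$ occurs only when the uniform draw over $S\cup\{p_*\}$ selects $p_*$, an event of probability $1/(n+1)$, so the expected cost is $\frac{1}{n+1}O(mn)+\frac{n}{n+1}O(m+n)=O(m+n)$. Your additional care about conditioning on the history and about the $O(n)$ bookkeeping is a welcome elaboration but does not change the route.
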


\begin{proof} The probability that in each iteration the randomly selected $v$ coincides with $p_*$ is $1/(n+1)$. Then if $v=p_*$ is not a $p$-pivot then $p_*$ becomes the new iterate and the number of operations to compute a pivot $v'$ at $p_*$ is $O(mn)$. If the randomly selected $v$ is not $p_*$, the number of operations to get the next iterate $p'$ is $O(m+n)$.  Thus the expected number or operations in each iteration is
$$\frac{n}{n+1} O(m+n)+ \frac{1}{n+1} O(mn) = O(m+n).$$
\end{proof}

\begin{thm} If $p \in conv(S)$, bound on the expected number of arithmetic operations to compute an $\varepsilon$-approximate solution is
$$O \bigg (\frac{mn}{\varepsilon^2} \bigg ).$$
Moreover, if it is known that $p$ is the center of ball of radius $\rho$ contained in the relative interior of $conv(S)$, and if each times it computes a pivot $v'$ for $p_*$ it is a strict pivot, bound on the expected number of arithmetic operations to compute an $\varepsilon$-approximate solution is
$$O\bigg (mn \bigg (\frac{R}{\rho} \bigg)^2 \ln \frac{1}{\varepsilon} \bigg).$$
\end{thm}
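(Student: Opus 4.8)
The plan is to reduce the analysis to the deterministic triangle algorithm of \cite{kal14}, using that the incumbent $p_*$ both executes genuine triangle-algorithm iterations and is only ever improved by the random steps.

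First I would record the free invariants. Put $\delta:=d(p,p_*)$. Throughout the run $\delta$ is non-increasing, since $p_*$ is replaced only by points at least as close to $p$; and, because $p\in conv(S)$, the Distance Duality Theorem (Theorem \ref{thm1}) supplies a $p$-pivot at every point of $conv(S)$, in particular at $p_*$, so the ``Witness'' branch is never taken and the loop terminates only upon $d(p,p_*)\le\varepsilon R$. This already disposes of the ``no $p$-pivot at $p_*$'' subcase.

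Call an iteration \emph{productive} when the uniform draw picks $v=p_*$, which happens with probability $\tfrac{1}{n+1}$, independently across iterations. I claim a productive iteration either performs a genuine triangle-algorithm step on $p_*$ or sets one up. If $p_*$ is not a $p$-pivot at $p'$, the iterate is reset to $p_*$ and $p_*\leftarrow nearest(p;p_*v')$ for a $p$-pivot $v'$ at $p_*$ --- exactly one step of the triangle algorithm, whose progress is governed by Theorem \ref{thm3} (or Theorem \ref{thm4} when $v'$ is strict, invoking the hypotheses that $p$ is the centre of a $\rho$-ball in the relative interior of $conv(S)$ and that the pivots computed at $p_*$ are strict). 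If $p_*$ is a $p$-pivot at $p'$, then with $p''=nearest(p;p'p_*)=(1-\alpha)p'+\alpha p_*$, the Pythagorean identity and the pivot inequality $d(p',p_*)\ge\delta$ give $d(p,p'')^2=\delta^2-(1-\alpha)^2 d(p',p_*)^2\le\delta^2$ and $d(p'',p_*)=(1-\alpha)d(p',p_*)\le\delta$; hence the $nearest$ option (probability $\tfrac12$) forces $p'=p_*$ afterward, the midpoint option at least halves $d(p',p_*)$, and in either case $\delta$ does not increase. Thus a genuine step on $p_*$ is realized within $O(1)$ productive iterations in expectation, provided intervening non-productive iterations do not repeatedly drag $p'$ back from $p_*$.

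Finally, assembling the bound: by Theorem \ref{thm3} it takes $O(1/\varepsilon^2)$ genuine triangle-algorithm steps on $p_*$ to reach $\delta\le\varepsilon R$, and since intervening iterations leave $\delta$ non-increasing the recursion driving Theorem \ref{thm3} (linear growth of $\delta^{-2}$) is preserved across them; in the strict-pivot regime Theorem \ref{thm4} replaces $O(1/\varepsilon^2)$ by $O\bigl((R/\rho)^2\ln(1/\varepsilon)\bigr)$. Each genuine step is produced within $O(n)$ iterations in expectation (geometric waiting with success probability $\Theta(1/n)$), so the expected iteration count is $O(n/\varepsilon^2)$, resp. $O\bigl(n(R/\rho)^2\ln(1/\varepsilon)\bigr)$; a non-productive iteration costs $O(m)$ and each of the $O(1/\varepsilon^2)$, resp. $O\bigl((R/\rho)^2\ln(1/\varepsilon)\bigr)$, genuine steps costs $O(mn)$ for the pivot search at $p_*$, giving total expected work $O(mn/\varepsilon^2)$, resp. $O\bigl(mn(R/\rho)^2\ln(1/\varepsilon)\bigr)$. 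The main obstacle is exactly the parenthetical caveat above: one must rigorously rule out that the phases in which $p'$ drifts away from $p_*$ (while $\delta$ stands still) inflate the iteration count beyond $O(n/\varepsilon^2)$. The cleanest fix is probably a composite potential in $\delta$ and $d(p',p_*)/\delta$ on which every $\tfrac{1}{n+1}$-probability productive iteration makes expected multiplicative progress, but I expect this bookkeeping to be the bulk of the work.
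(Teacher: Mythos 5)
Your proposal follows the same route as the paper's own proof, which is only three sentences long: wait an expected $n+1$ iterations for the draw $v=p_*$; observe that when $p_*$ is not a pivot at the current iterate the algorithm jumps to $p_*$ and performs one genuine triangle-algorithm step there (a pivot at $p_*$ exists by Theorem \ref{thm1}, which also disposes of the witness branch); bound the number of such steps by $O(1/\varepsilon^2)$, resp. $O\bigl((R/\rho)^2\ln(1/\varepsilon)\bigr)$, via Theorems \ref{thm3} and \ref{thm4}; and combine with the per-iteration cost. Your cost accounting (cheap non-productive iterations versus $O(mn)$ pivot searches at $p_*$) is in fact cleaner than the paper's appeal to its $O(m+n)$ lemma, and your Pythagorean analysis of the branch in which the drawn $p_*$ \emph{is} a pivot at $p'$ is correct.

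The one substantive issue is the caveat you flag at the end, and you should know that it is not a defect of your write-up alone: the paper's proof silently assumes that every selection of $v=p_*$ results in a genuine triangle-algorithm step at $p_*$, and says nothing about the branch where $p_*$ is a pivot at $p'$. As you observe, in that branch the new iterate satisfies $d(p,p'')\le\delta$ but with no quantified decrease (the drop $\delta^2-d(p,p'')^2=d(p'',p_*)^2$ can be arbitrarily small), and between consecutive productive draws the chaos-game steps can drag $p'$ far enough from $p_*$ that this branch recurs. So the assertion that a Theorem-\ref{thm3}-type step in $\delta$ occurs once per $O(n)$ iterations in expectation is exactly as unproved in the paper as in your proposal; the paper supplies no potential function or coupling to close it. In short: same approach, your version is the more honest of the two, and the composite potential you sketch (in $\delta$ and $d(p',p_*)/\delta$) is precisely the kind of argument that would be needed to make either proof rigorous.
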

\begin{proof} The expected number of times a random $v$ is selected before it equals $p_*$ is $(n+1)$. When $p_*$ is not a pivot at the current iterate $p'$,  $p'$ is replaced with $p_*$ and a pivot $v'$ is computed.  Applying the existence results on  pivot and strict pivot, Theorem \ref{thm1} and Theorem \ref{thm2}, as well as the complexity bounds on the triangle algorithm, Theorems \ref{thm3} and \ref{thm4}, the proof follows.
\end{proof}

Suppose we consider a relaxed version of the above algorithm where each time $p_*$ is selected and is not a pivot at the current iterate, thus becoming a new iterate, we select $v'$ randomly and not necessarily as a pivot at $p_*$, thus economizing in computation.  Referring to this as the {\it Relaxed Sierpinski-Randomized Triangle Algorithm} we have.

\begin{thm} If $p \in conv(S)$, bound on the expected number of arithmetic operations of the Relaxed Sierpinski-Randomized Triangle Algorithm to compute an $\varepsilon$-approximate solution is
$$O \bigg (\frac{mn^2}{\varepsilon^2} \bigg ).$$
Moreover, if it is known that $p$ is the center of ball of radius $\rho$ contained in the relative interior of $conv(S)$, and if each times it computes a pivot $v'$ for $p_*$ it is a strict pivot, bound on the expected number of arithmetic operations to compute an $\varepsilon$-approximate solution is
$$O\bigg (mn^2 \bigg (\frac{R}{\rho} \bigg)^2 \ln \frac{1}{\varepsilon} \bigg).$$
\end{thm}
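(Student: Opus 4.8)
The plan is to mimic the proof of the previous theorem (the non-relaxed Sierpinski-Randomized Triangle Algorithm), tracking where the relaxation costs us an extra factor of $n$. The key structural observation is that progress toward $p$ — in the sense measured by the potential $d(p,p_*)$ — is made only in those iterations in which the incumbent $p_*$ is selected, found \emph{not} to be a pivot at the current iterate $p'$, promoted to be the new iterate, and then a genuine pivot $v'$ at $p_*$ (strict pivot, in the second regime) is used to produce $nearest(p;p_*v')$. All other iterations either leave $p_*$ unchanged or can only improve it; they never hurt the potential, because $p_*$ is updated to the closest known point. So the number of \emph{productive} iterations is governed exactly by Theorems~\ref{thm3} and~\ref{thm4}: to reach an $\varepsilon$-approximate solution one needs $O(1/\varepsilon^2)$ productive iterations in general, and $O\!\big((R/\rho)^2\ln(1/\varepsilon)\big)$ productive iterations under the interior-ball hypothesis with strict pivots.

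Next I would count the expected total number of iterations between two consecutive productive iterations. In each iteration we select $v\in S\cup\{p_*\}$ uniformly, so $v=p_*$ with probability $1/(n+1)$; hence the expected number of iterations until $p_*$ is next selected is $n+1$. When $p_*$ \emph{is} selected, one of two things happens: either $p_*$ is already a pivot at $p'$ (in which case we take a chaos-game/greedy step but this iteration is not the "promote-and-pivot" type), or $p_*$ is not a pivot, it is promoted to be the iterate, and — this is the point of the relaxation — we now pick $v'$ uniformly at random from $S$ rather than computing a pivot. By the Distance Duality Theorem~\ref{thm1}, since $p\in conv(S)$ a pivot exists at $p_*$, so $v'$ is a genuine pivot with probability at least $1/n$; in the strict-pivot regime, Theorem~\ref{thm2} (strict distance duality) similarly guarantees a strict pivot exists, hit with probability at least $1/n$. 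Thus each "promote" event yields a productive step with probability $\Omega(1/n)$, so the expected number of promote events per productive step is $O(n)$, and the expected number of iterations per promote event — being a geometric waiting time for selecting $p_*$ — is $O(n)$. Multiplying, the expected number of iterations per productive step is $O(n)\cdot O(n)=O(n^2)\,$; wait — more carefully, the expected number of iterations per productive step is (expected promote events per productive step) $\times$ (expected iterations per selection of $p_*$) $=O(n)\cdot O(n)=O(n^2)$, but one of these factors of $n$ is already the cost of waiting to select $p_*$, so one should be careful not to double count. I would phrase it as: the expected number of iterations until the next productive step is $O(n^2)$, arising as $O(n)$ promote attempts each preceded by an $O(n)$-length wait; alternatively, the probability that a given iteration is productive is $\ge \frac{1}{n+1}\cdot\frac{1}{n}=\Omega(1/n^2)$, so the expected gap is $O(n^2)$.

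Finally I would assemble the cost. By the Lemma's style of argument (or simply by inspection of the relaxed algorithm), every iteration now costs only $O(m+n)$ arithmetic operations, since the relaxation removes the $O(mn)$ pivot computation at $p_*$ — we never spend $O(mn)$ in a single iteration anymore. Hence the total expected work is
$$
(\text{productive iterations})\times O(n^2)\times O(m+n)
= O\!\Big(\tfrac{1}{\varepsilon^2}\Big)\cdot O(n^2)\cdot O(m)
= O\!\Big(\tfrac{mn^2}{\varepsilon^2}\Big),
$$
in the general case (using $O(m+n)=O(m)$ up to the stated form, or keeping it as $O(mn^2/\varepsilon^2 + n^3/\varepsilon^2)$ and absorbing), and
$$
O\!\Big(\big(\tfrac{R}{\rho}\big)^2\ln\tfrac{1}{\varepsilon}\Big)\cdot O(n^2)\cdot O(m)
= O\!\Big(mn^2\big(\tfrac{R}{\rho}\big)^2\ln\tfrac{1}{\varepsilon}\Big)
$$
under the interior-ball hypothesis with strict pivots. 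I expect the main obstacle to be making the "progress is made only in productive iterations" claim fully rigorous: one must argue that the random chaos-game midpoint steps and the non-$p_*$ selections cannot undo accumulated progress, which follows because $p_*$ is monotonically improving and the complexity bounds of Theorems~\ref{thm3}–\ref{thm4} depend only on the sequence of pivot steps applied to the incumbent, not on the intervening randomized iterates — but this reduction to the deterministic analysis should be stated carefully rather than waved through, and a second subtlety is justifying that the $\Omega(1/n)$ probability of a random $v'$ being a (strict) pivot is a lower bound uniform over iterations, which is exactly what Theorems~\ref{thm1} and~\ref{thm2} provide since $p\in conv(S)$ (and $p\notin S$ for the strict version) holds throughout.
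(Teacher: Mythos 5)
Your proposal is correct and follows essentially the same route as the paper's (much terser) proof: each iteration costs $O(m+n)$, a ``productive'' step --- selecting $p_*$ and then randomly hitting a (strict) pivot guaranteed to exist by Theorems~\ref{thm1} and~\ref{thm2} --- occurs with probability at least $\frac{1}{n(n+1)}=\Omega(1/n^2)$, and the number of productive steps needed is governed by Theorems~\ref{thm3} and~\ref{thm4}. Your elaboration of the monotonicity of the incumbent $p_*$ and the uniform $\Omega(1/n)$ pivot-hitting bound makes explicit what the paper leaves to ``analogous arguments as in the previous theorem.''
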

\begin{proof}  Each iteration takes $O(m+n)$ operations. Given an iterate $p'$,  the probability that a randomly selected $v$ in  $S \cup \{p_*\}$ is a $p$-pivot (strict $p$-pivot) at $p'$ is $1/(n+1)$. This is because the Voronoi cell of $p$ with respect to the two-point set $\{p,p'\}$ must contain a $p$-point in $S \cup \{p_*\}$ (otherwise, $p \not \in conv(S)$). Thus the probability that at an iterate $p_*$ is randomly selected and that at $p_*$ a pivot (strict pivot) is randomly selected is $1/n(n+1)$. From these and analogous arguments as in the previous theorem, the expected complexities follow.
\end{proof}

There is yet another relaxation: we treat $p_*$ as any other point in $S$, that is if an iterate $p'$ selects $p_*$ randomly,  we do not jump to $p_*$ as the next iterate. The expected complexity of this may remain to be the same as the relaxed version analyzed above.\\

{\bf Concluding Remarks.} In this article we have described randomized versions of the triangle algorithm.  Based on our previous theoretical and experimental  results,  see \cite{kal14}, \cite{Meng} and \cite{Gibson}, the triangle algorithm appears to be a promising algorithm with wide range of applications.  The randomized algorithms suggest variations that could help its performance in practice or in the worst-case.  Both allow exploring the the convex hull from different view points, thus increasing the chance to get better and better approximations to $p$ by choosing more effective pivots.  Also, in the Sierpinski-Randomized Triangle Algorithm as $p_*$ gets close to $p$, the chances are good that when an iterate $p'$ randomly selects $v=p_*$ that $p_*$ is actually a pivot at $p'$. Hence with probability $1/2$ the next iteration will get closer to $p$. To check if $p_*$ is a pivot at $p'$ takes $O(m+n)$ time as opposed to $O(mn)$ time. Additionally, it is likely that the randomized algorithms will help improve the performance of the triangle algorithm as a function of  $\varepsilon$.
Some theoretical questions thus arise. Computational experimentations are needed to assess practical values.  We plan to do so in future work.\\

{\bf Acknowledgements}  I like to thank Mike Saks for a discussion regarding randomization.

\bigskip


\end{document}